\def\pp{\vskip3ex}
\newcommand{\eq}[1]{\begin{equation}
                     \begin{split} #1 \end{split}
                     \end{equation}}
\newcommand{\R}{\mathbb{R}}
\newcommand{\RR}{{\mathcal{R}}}
\newtheorem{theorem}{Theorem}
\newtheorem{definition}{Definition}
\newtheorem{rem}[definition]{Remark}
\def\DFT{double field theory\ }
\def\DFTp{double field theory}
\begin{document}
\thispagestyle{empty}
\vspace*{-1.5cm}

\begin{flushright}
  {
  ITP-UH-07/14\\
  
  }
\end{flushright}

\vspace{1.5cm}


\begin{center}
{\huge
Even symplectic supermanifolds   \\
and double field theory
}
\end{center}


\vspace{0.4cm}

\begin{center}
  Andreas Deser$^{1}$ and Jim Stasheff$^{2}$
\end{center}


\vspace{0.4cm}

\begin{center} 
\small{\emph{$^{1}$ Institut f\"ur Theoretische Physik } and \emph{Riemann Center for Geometry and Physics}, \\ 
   {\em Leibniz Universit\"at Hannover}\\
}
\small{Email: {\tt andreas.deser@itp.uni-hannover.de} }

\vspace{0.25cm}

\small{\emph{$^{2}$ UNC-CH }and \emph{University of Pennsylvania}}\\
\small{Email: {\tt jds@math.upenn.edu}}

\vspace{0.25cm}

\end{center} 

\vspace{0.4cm}

\begin{center}
\today
\pp
\end{center}

\begin{abstract}
Over many decades, the word ``\emph{double}'' has appeared in various contexts, at times seemingly unrelated\footnote{Compare the over use of \emph{twisting}.}. Several have some relation  to mathe\-ma\-ti\-cal physics. Recently, this has become particularly strking in DFT (\DFTp).
\pp
Two `doubles' that are particularly relevant are
\begin{itemize}
\item double vector  bundles and
\item Drinfel'd doubles.
\end{itemize}
The original Drinfel'd double occurred in the contexts of quantum groups \cite{drinfeld:qgroups} and of Lie bialgebras \cite{drinfeld:Poisson}.

Quoting T. Voronov \cite{thedya:Q&Mac}:
\pp

\begin{quote}
Double Lie algebroids arose in the works on double Lie
grou\-po\-ids~\cite{mackenzie:secondorder1},
 \cite{mackenzie:secondorder2} and in connection with an analog for
Lie bialgebroids of the classical Drinfel'd double of Lie
bialgebras~\cite{mackenzie:doublealg}, 
\cite{mackenzie:drinfeld}.\dots
Suppose $(A,A^*)$
is a Lie bialgebroid over a base $M$.\dots
Mackenzie in~\cite{mackenzie:secondorder1}, \cite{mackenzie:drinfeld},
\cite{mackenzie:notions} and Roytenberg in~\cite{Deethesis}
suggested two different constructions based on the cotangent bundles
$T^*A$ and $T^*\Pi A$, respectively. Here $\Pi$ is the fibre-wise parity reversal functor.
\end{quote}

Although the approaches of Roytenberg and of Mackenzie look very
different, Voronov  establishes their equivalence.
We have found Roytenberg's version to be quite congenial with our attempt
to interpret the gauge algebra of \DFT\  in terms of 
 Poisson brackets on a suitable generalized Drinfel'd double.
This double  of a Lie bialgebroid $(A,A^*)$ provides a framework 
to describe the differentials of $A$ and $A^*$ on an equal footing as Hamiltonian functions on an even symplectic supermanifold. 
A special choice of momenta explicates the double coordinates of DFT\ and shows their relation to the strong constraint determining the physical fields of double field theory.

\end{abstract}


\clearpage

\tableofcontents


\section{Introduction}
Closed string theory on a toroidal target space has the peculiarity of ex\-hi\-bi\-ting two sets of momentum-type variables: One canonically conjugate to the standard center of mass position coordinates and another describing the winding degrees of freedom of a closed string around a compact cycle on the target space. Associating to the latter canonically conjugate coordinates, one is lead to \emph{double} the number of 
 configuration space coordinates. 

The mass spectrum of a quantized closed string on a toroidal geometry of length scale $R$ enjoys a distinctive symmetry, called \emph{T-duality} which inverts the scale $R \mapsto l_s^2/R$ ($l_s$ is called the fundamental string length) and simultaneously \emph{exchanges} standard momentum and winding quantum numbers. As a consequence, on the classical level the aim for a space-time description of the massless sector of a closed string (incorporating the metric $G_{ij}$ a two-form $B_{ij}$ and a scalar dilaton $\Phi$) having a manifest T-duality symmetry requires a field theory being covariant under the exchange of the two sets of configuration space coordinates, or more generally under $O(d,d)$-transformations.

Historically,
double field theory (DFT) \cite{Siegel:1993th,Hull:2009mi,Hohm:2010jy,Hohm:2010pp} was a proposal to incorporate these transformations as a symmetry of a field  theory defined on a \emph{double configuration space} with coordinates $(x^i,\tilde x_j)$. 

Prior to doubling,  there was a need to find a common setting for both  diffeomorphisms   
and $B$-field gauge transformations.
This is already the  basic idea of  \emph{generalized geometry}, 
combining vector fields and one-forms into a single object. Formally, on a
manifold $M,$ one introduces a \emph{generalized tangent
bundle} $E$ which is a particular extension of $T$ by $T^*$.  
This construction is a typical example of a \emph{Courant algebroid} \cite{0885.58030} arising from a Lie bialgebroid \cite{Mackenzie:GT,kirill&ping}. $B$-field transformations are automorphisms of the Courant bracket for the case of $dB=0$, whereas for general $B$-fields, gerbes are used.

For application to DFT, the use of doubles in the Drinfel'd sense suggests itself. Consider $(TM,T^*M)$ or more generally $(A,A^*)$
as a Lie bialgebroid over a base $M$.
Mackenzie in~\cite{mackenzie:drinfeld}, \cite{kirill:Crelle}, and Roytenberg in~\cite{Deethesis}
suggested two different constructions based on the cotangent bundles
$T^*A$ and $T^*\Pi A$, respectively. Here $\Pi$ is the fibre-wise parity reversal functor.
Although the approaches of Roytenberg and of Mackenzie look very
different, T. Voronov \cite{thedya:Q&Mac}  establishes their equivalence.
We have found an elegant and unifying language for Lie bialgebroids, Courant algebroids and their exterior algebras using even symplectic supermanifolds as given by Roytenberg in \cite{Deethesis}, see also \cite{Roytenberg:2001am,Roytenberg:2002nu}; a translation into Mackenzie's formalism might serve as well, but physicists seem comfortable with supermanifolds.


Whereas the relation of the ``supermathematical'' viewpoint to generalized geometry and its physical applications is clear and precise, it is natural to ask about its relation to double field theory. Conversely, a precise definition of double fields and an understanding of the gauge algebra of DFT could lead to new insights on the physics of double fields, e.g. the incorporation of three-form fluxes and three-tensors in a mathematically precise way. 

In the following, we start by reviewing the basic mathematical language of Lie bialgebroids, Courant algebroids and doubles.
Then we introduce even symplectic supermanifolds as used in \cite{Deethesis} as well as basic facts of double field theory, especially the algebra of its gauge transformations. These sections are intended to present only the minimal amount of material needed to follow the rest of the exposition. The section following these more introductory parts contains the main result:  an interpretation of double fields in terms of functions on a suitable
 double of a Lie bialgebroid. 
 
 As a result, we will derive the C-bracket of DFT by using the Poisson algebra on the underlying supermanifold
 and reveal it as a Courant bracket; only the formula seemed to indicate it was something more unusual. As applications, we will elaborate on the strong constraint of DFT and on a special projection of it to standard generalized geometry by dropping the winding coordinates. We conclude with an outlook on possible advantages of the established formalism.

\section{The language: Lie bialgebroids and various `Drinfel'd' doubles}
\label{sec-language}

We begin by reviewing the subject of Lie bialgebroids  \cite{Mackenzie:GT} and then turn to the ``supermathematical'' viewpoint, closely following the clear and detailed exposition of the subject given in
Roytenberg's thesis \cite{Deethesis}.  Starting with a Lie algebroid $A$ and its dual $A^*$ and  applying the parity reversal functor $\Pi$ to fibres,
 it is possible to recast the Lie bialgebroid condition in terms of the canonical even symplectic structure on the cotangent bundle of $\Pi A$. 

\subsection{Lie bialgebroids and Courant algebroids}
Lie algebroids are a common generalization of the tangent bundle and Lie algebras. They are vector bundles equipped with a Lie bracket on  the sections of the bundle and should not  be thought of as ``bundles of Lie algebras''.
 Besides the use in Poisson-geometry, they have been used recently in physics, especially in the application of generalized geometry and string theory \cite{Vaisman:2012ke,Blumenhagen:2012nt,Blumenhagen:2013aia}. The precise definition is 
\begin{definition}
A vector bundle $A \rightarrow M$, equipped with a skew-symmetric bracket $[\cdot,\cdot]_A$ on the space of sections $\Gamma(A)$ satisfying the Jacobi identity, and a bundle homomorphism $a : A \rightarrow TM$ (termed anchor) is called \emph{Lie algebroid} if the following Leibniz rule holds:
\eq{
[X,fY]_A = \, f[X,Y]_A + \Bigl(a(X)f\Bigr)Y\;,
}
for $X,Y \in \Gamma(A)$ and $f \in {\cal C}^{\infty}(M)$.
\end{definition}

As a consequence, on the exterior algebra of sections in the dual bundle $\Gamma(\wedge^\bullet A^*),$ it is possible to define the Chevalley-Eilenberg differential:
\eq{
d_A \omega(X_0,&\dots,X_k) =\;  \sum_{i=0}^k \,(-1)^i a(X_i)\Bigl(\omega(X_0,\dots,\hat X_i,\dots X_k)\Bigr) \\
&+ \underset{0\leq i < j \leq k}\sum\, (-1)^{i+j}\omega([X_i,X_j]_A,X_0,\dots,\hat X_i, \dots,\hat X_j,\dots X_k) \;,
}
for $\omega \in \Gamma(\wedge^k A^*)$. The Jacobi identity for the bracket on $A$ is used to prove that $d_A^2 = 0$. It turns out that differential geometric notions known from the tangent bundle case are generalizable to Lie algebroids. Examples are Lie derivatives, defined in the standard way as the graded commutator of $d_A$ and the insertion map $i_X(\omega) = \omega(X,\cdots)$, i.e. for sections $X$ in $A$,  $L^A_X = [d_A,i_X] = d_A\circ i_X + i_X \circ d_A$.

Clearly, the tangent bundle $(TM,[\cdot,\cdot],\textrm{id})$ is a Lie algebroid, where the bracket is the Lie bracket or more generally the Schouten bracket of polyvector fields. The corresponding differential on the exterior algebra of its dual is the standard de Rham differential. If $M$ in addition is Poisson, also $T^*M$ can be equipped with a bracket (called Koszul-Schouten bracket \cite{zbMATH03996707}) and an anchor in terms of the Poisson tensor, turning it into a Lie algebroid. The de Rham differential in addition is a derivation of the Koszul-Schouten bracket, which is a motivation for the following definition~\cite{Kosmann-Schwarzbach:1995}.
\begin{definition}
A pair $(A,A^*)$ of a Lie algebroid $A$ and its dual is called \emph{Lie bialgebroid} if the differential $d_A$ on $\Gamma(\wedge^\bullet A^*)$ is a derivation of the bracket $[\cdot,\cdot]_{A^*}$ on $\Gamma(\wedge^\bullet A^*)$.
\end{definition}
One can show
 \cite{kirill&ping,Kosmann-Schwarzbach:1995}
that if $(A,A^*)$ is a Lie bialgebroid, then also $(A^*,A)$ has this property.
Lie bialgebroids are special examples of \emph{Courant algebroids}. We start by giving a definition which is equivalent to the original definition used in \cite{zbMATH00004959}\footnote{We refer the reader to \cite{yks:short}
 for a proof of the equivalence of the definitions.}.
\begin{definition}
A vector bundle $E \rightarrow M$ equipped with a nondegenerate bilinear form $\langle\cdot,\cdot\rangle$ and a bilinear operation $\circ$ on sections in $E$, together with a bundle map $\rho : E \rightarrow TM$ is called \emph{Courant algebroid} if the following properties hold:
\begin{itemize}
\item $s_1 \circ (s_2 \circ s_3) = (s_1\circ s_2)\circ s_3 + s_2 \circ (s_1 \circ s_3)$ for $s_i \in \Gamma(E)$, 
\item $\rho(s_1\circ s_2) = [\rho(s_1),\rho(s_2)]$ for $s_i \in \Gamma(E)$,
\item $s_1 \circ f\,s_2 = f(s_1\circ s_2) +\Bigl(\rho(s_1)f\Bigr)s_2$ for $s_i\in \Gamma(E), f \in {\cal C}^\infty(E)$, 
\item $s\circ s = \tfrac{1}{2}{\cal D}\langle s,s\rangle$ for $s \in \Gamma(E)$ and ${\cal D}: {\cal C}^\infty(M) \rightarrow \Gamma(E)$ defined by $\langle {\cal D}f,s\rangle = \rho(s)f$. 
\item $\rho(e)\langle s_1,s_2\rangle = \langle e \circ s_1,s_2\rangle + \langle s_1,e\circ s_2 \rangle$ for $e,s_i \in \Gamma(E)$\;.
\end{itemize}
\end{definition}
It is important to note that the map $\circ$, although called the \emph{Dorfman} bracket,  is not skew-symmetric in general, so we do not want to call it a bracket. However, it is related to the standard Courant bracket of two sections $s_1,s_2$ by
\eq{
[s_1,s_2] = \; \frac{1}{2}(s_1\circ s_2 - s_2 \circ s_1)\;.
}
To see the relation to Lie bialgebroids, let $(A,[\cdot,\cdot]_A,a)$ be a Lie algebroid and $(A^*,[\cdot,\cdot]_{A^*},a_*)$ be its dual such that the pair $(A,A^*)$ is a Lie bialgebroid. Let further $E = A \oplus A^*$. We denote sections in $E$ by $X+\eta, Y+\omega \in \Gamma(E)$ and define the operation $\circ$ and the bilinear form $\langle \cdot, \cdot \rangle$ by 
\eq{
\langle X+\eta, Y+\omega \rangle =&\; \eta(Y) + \omega(X) \;, \\
(X+\eta)\circ (Y+\omega) &=\;\Bigl([X,Y]_A + L^{A^*}_\eta Y - i_\omega d_{A^*}X\Bigr) \\
&\;+ \Bigl([\eta,\omega]_{A^*} + L^A_X \omega - i_Y d_A \eta\Bigr)\;.
}
Together with the bundle map determined by the two anchors, $\rho(X+\eta) = a(X) + a_*(\eta)$, it is possible to prove the following\footnote{We refer to \cite{lwx} for a detailed proof.}




\begin{theorem}[\cite{0885.58030}]
If $(A,A^*)$ is a Lie bialgebroid, then $(E,\langle\cdot,\cdot\rangle, \circ, \rho)$ is a Courant algebroid.
\end{theorem}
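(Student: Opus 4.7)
The plan is to verify each of the five Courant algebroid axioms for $(\circ,\langle\cdot,\cdot\rangle,\rho)$ in turn, ordered by how deeply they use the bialgebroid compatibility. Two preparatory observations organize the calculation. Since the pairing is tautological on $A\oplus A^*$ one has $\langle X+\eta,X+\eta\rangle = 2\langle X,\eta\rangle = 2\,\eta(X)$, and the operator $\mathcal{D}$ determined by $\langle\mathcal{D}f,X+\eta\rangle = \rho(X+\eta)f$ splits as $\mathcal{D}f = d_{A^*}f + d_A f$. Also, each term in the definition of $\circ$ belongs to a single summand ($A$ or $A^*$), so every axiom decomposes into components checkable separately.

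I would first dispose of the axioms that use only the separate Lie algebroid structures. The anchor identity $\rho(s_1\circ s_2) = [\rho(s_1),\rho(s_2)]$ splits into the bracket--anchor identity for $a$, the analogous one for $a_*$, and mixed terms that reduce to consequences of the bialgebroid condition relating $a$, $a_*$ and the two differentials. The function-Leibniz axiom follows by combining the Leibniz rules of $[\cdot,\cdot]_A$ and $[\cdot,\cdot]_{A^*}$ with the Leibniz rule for the Lie derivatives $L^A, L^{A^*}$ and the identity $i_{fY}\eta = f\,i_Y\eta$. For the symmetric part $s\circ s = \tfrac{1}{2}\mathcal{D}\langle s,s\rangle$, setting $s=X+\eta$, Cartan's magic formula $L^A_X = [d_A,i_X]$ on the $A^*$-summand gives $L^A_X\eta - i_X d_A\eta = d_A\langle X,\eta\rangle$, and the dual formula on the $A$-summand gives $L^{A^*}_\eta X - i_\eta d_{A^*}X = d_{A^*}\langle X,\eta\rangle$, summing to $\mathcal{D}\langle X,\eta\rangle = \tfrac{1}{2}\mathcal{D}\langle X+\eta,X+\eta\rangle$. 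The pairing-invariance axiom then follows by polarizing this identity, exactly as in the classical Courant bracket argument.

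The main obstacle is the left-Leibniz (Jacobi-like) identity $s_1\circ(s_2\circ s_3) = (s_1\circ s_2)\circ s_3 + s_2\circ(s_1\circ s_3)$. Expanding on $A\oplus A^*$ produces a large number of terms involving both brackets, the Lie derivatives $L^A, L^{A^*}$, the insertions $i$, and the differentials $d_A, d_{A^*}$. The bulk of the terms bundle into the Jacobi identities of $[\cdot,\cdot]_A$ and $[\cdot,\cdot]_{A^*}$ on appropriate slots; the remaining cross terms, which mix $A$-data with $A^*$-data, vanish precisely when $d_A$ is a derivation of $[\cdot,\cdot]_{A^*}$, that is, the bialgebroid condition, and by symmetry when $d_{A^*}$ is a derivation of $[\cdot,\cdot]_A$.

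A conceptually cleaner route, aligned with the rest of this paper, is to encode both brackets as a single cubic Hamiltonian $\Theta$ on the even symplectic supermanifold $T^*\Pi A$. The bialgebroid condition translates to $\{\Theta,\Theta\}=0$, and defining the Dorfman bracket as the derived bracket $s_1\circ s_2 = \{\{s_1,\Theta\},s_2\}$ makes all five Courant axioms automatic consequences of the graded Jacobi identity for the Poisson bracket combined with $\{\Theta,\Theta\}=0$. I would present the term-by-term calculation as the primary proof, following the route of the cited reference, and invoke the supermanifold viewpoint as the structural explanation that replaces the otherwise lengthy bookkeeping.
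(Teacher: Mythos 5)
The paper does not actually prove this theorem: it is quoted from Liu--Weinstein--Xu, with a footnote deferring the detailed proof to \cite{lwx}. So there is no in-text argument to match, and your two routes correspond exactly to the two proofs in the literature: the term-by-term verification of the five axioms is the Liu--Weinstein--Xu route, while the derived-bracket argument on $T^*\Pi A$ with a cubic Hamiltonian is Roytenberg's route --- and the latter is the one that meshes with this paper's own machinery, since $\Theta$ is precisely the function $\mu = h_{d_A} + L^*h_{d_{A^*}}$ of Theorem~2 (where the bialgebroid condition becomes $\{\mu,\mu\}=0$) and the derived expression $\{\{\cdot,\mu\},\cdot\}$ is what Theorem~3 later uses to produce the C-bracket. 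Presenting the direct verification as primary and the graded-symplectic picture as the structural explanation is a perfectly reasonable organization; be aware, though, that in the direct route the crux --- the left-Leibniz identity for $\circ$ --- is exactly where Liu--Weinstein--Xu spend most of their effort (and where one needs the dual statement that $d_{A^*}$ is a derivation of $[\cdot,\cdot]_A$, i.e.\ the cited fact that $(A^*,A)$ is again a bialgebroid), so your one-sentence description of how the cross terms cancel is an outline, not yet a proof.

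One step is wrong as stated: the pairing-invariance axiom does not ``follow by polarizing'' the identity $s\circ s=\tfrac12\mathcal{D}\langle s,s\rangle$. Polarization of that identity only yields the symmetric part of the Dorfman product, $s_1\circ s_2+s_2\circ s_1=\mathcal{D}\langle s_1,s_2\rangle$, which is a different statement from $\rho(e)\langle s_1,s_2\rangle=\langle e\circ s_1,s_2\rangle+\langle s_1,e\circ s_2\rangle$. The fix is cheap: checking the invariance axiom componentwise (pure $A$, pure $A^*$, and mixed slots) reduces to the Leibniz property of the Lie derivatives $L^A$, $L^{A^*}$ on the dual pairing plus the antisymmetry of $d_{A^*}Z\in\Gamma(\wedge^2A)$ and $d_A\eta\in\Gamma(\wedge^2A^*)$, and does not even use the compatibility condition; alternatively, in the graded picture it is an instance of the Jacobi identity applied to $\{\{e,\Theta\},\{s_1,s_2\}\}$, since $\langle s_1,s_2\rangle=\{s_1,s_2\}$ and $\rho(e)f=\{\{e,\Theta\},f\}$. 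With that repair, and with the degree-$\le 1$ function identifications ($\Gamma(A\oplus A^*)$ as degree-one functions, $\mathcal{D}f=\{\Theta,f\}$, $\mathcal{D}f=d_Af+d_{A^*}f$ as you note) made explicit, your plan is a faithful reconstruction of the known proofs of the cited theorem.
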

The corresponding Courant bracket is given by antisymmetrizing the $\circ$-operation. As similar brackets play a role in double field theory, we give their explicit form:
\eq{
[X+\eta,Y+\omega] = &\;[X,Y]_A + L_\eta^{A^*}Y - L_\omega^{A^*}X - \frac{1}{2}d_{A^*}\Bigl(Y(\eta) - X(\omega)\Bigr)  \\
&+ [\eta,\omega]_{A^*} + L_X^{A}\omega - L_Y ^A\eta + \frac{1}{2} d_A\Bigl(\eta(Y) - \omega(X)\Bigr)\;.
}
Again, the most direct example is the case of a Poisson manifold $M$, where $A = TM$. In the previous expression, we then have the standard Lie bracket for $[\cdot,\cdot]_A$ and the Koszul-Schouten bracket for $[\cdot,\cdot]_{A^*}$ \cite{zbMATH03996707}. The differentials are the de Rham differential on $\Gamma(\wedge^\bullet A^*)$ and the Poisson differential (given in terms of the Schouten-Nijenhuis bracket \cite{0023.17002,nijenhuis:jacobi} and the Poisson tensor) on $\Gamma(\wedge^\bullet A)$.

\subsection{Reformulation in terms of supermanifolds}
The structures introduced in the previous section can be reformulated in the language of supermanifolds. One of the advantages of this language is to get a meaningful notion of the ``sum'' of the two differentials $d_A$ and $d_{A^*}$. Note that  the two operators act on different spaces but we can combine them on a single space by interpreting them as functions $h_{d_A}$ and $h_{d_{A^*}}$ related by a Legendre transform $L$. One then has $h_{d_A} + L^*h_{d_{A^*}}$ as a well defined sum.

\subsubsection{Parity reversed Lie algebroids and their cotangent bundles}
A detailed introduction to supermanifolds would go far beyond the scope of this paper and is not needed in later chapters, so we refer the reader to the literature on the subject, e.g. \cite{Cattaneo:2010re} . We again closely follow the exposition of \cite{Deethesis} and begin with an informal description of elementary supergeometric notions. 

Let $V= V_0 \oplus V_1$ be a finite dimensional $\mathbb{Z}_2$-graded vector space.\footnote{The grading is often called parity.}. The elements of $V_0$ and $V_1$ are called even and odd elements, respectively. We define \emph{parity reversion} $\Pi$ by 
\eq{
(\Pi V)_0 = V_1 \quad \textrm{and}\quad (\Pi V)_1 = V_0\;.
}
The dimension of $V$ is often denoted by $\textrm{dim}V = (\textrm{dim} V_0 | \textrm{dim} V_1)$.
If $V$ is a finite dimensional even vector space, then polynomial functions on $\Pi V$ can be identified with elements in the exterior algebra of the dual of $V$, i.e. 
\eq{
\textrm{Pol}^\bullet (\Pi V) \simeq \wedge ^\bullet V^*\;.
}
Considering the super-vectorspace $\mathbb{R}^{(n|m)}$ and an open subset $U_0 \subset \mathbb{R}^n$, polynomials in the odd variables with coefficients depending smoothly on the even variables in $U_0$ are smooth functions on the superdomain $U^{(n|m)}$. They form the supercommutative algebra ${\cal C}^\infty(U) \otimes \wedge^\bullet(\mathbb{R}^m)^*$. The idea of a \emph{supermanifold} is to have coordinate patches given by superdomains of the form $U^{(n|m)}$. We are not going to introduce locally ringed spaces to make the notion of a supermanifold precise, but  only mention the possibility of defining symplectic (and Poisson) supermanifolds. A symplectic form $\omega$ on a supermanifold $Q$ with even coordinates $x^i$ and odd coordinates $\xi^m$, is a closed and non-degenerate two-form\footnote{Which can be seen as a quadratic function on $\Pi TQ$ as we explained before.} 
\eq{
\omega = \frac{1}{2} \omega_{ij} dx^i\wedge dx^j + \omega_{im} dx^i\wedge d\xi^m + \omega_{mn} d\xi^m \wedge d\xi^n\;,
}
with skew-symmetric part $\omega_{ij}$ and \emph{symmetric} part $\omega_{mn}$. Poisson brackets are defined similarly to the purely even case, but having in mind the graded nature of the $\xi^m$, e.g. 
\eq{
\{\xi^m,\xi^n\} = \{\xi^n, \xi^m\}\;.
}

\def\Oplus{\bigoplus}

\begin{rem}
There is an alternate point of view using a $\mathbb Z$-grading. For a  $\mathbb Z$-graded module $V = \Oplus V_n$ over a commutative ring $\RR$, denote by $sV$ the graded module 
$sV=\Oplus (sV)_{n+1}$ where $ (sV)_{n+1}$ is isomorphic to $V_n$.  Read ``$s$'' as suspension or shift. For a free graded commutative algebra $\RR[V]$ generated by $V$, consider 
the free graded commutative algebra $\RR[sV].$  For $V$ of finite rank in each degree, $\RR[V]$ splits as $\RR[V_{even}] \otimes  \Lambda V_{odd}$ and similarly for  $\RR[sV].$
\end{rem}

Let us now come to the cases important for the following sections. Starting with a Lie algebroid $A$ and a dual Lie algebroid $A^*$, we can apply the parity reversion functor fiberwise. If $e_i$ and $e^i$ denote  basis elements of local frames of $A$ and $A^*$, respectively, we will denote the corresponding anti-commuting versions by 
\eq{
\Pi e^i =\; \xi^i \;, \qquad \Pi e_i =\; \theta_i\;.
}
As explained above, elements in $\Gamma(\wedge^\bullet A^*)$ can be seen as functions on $\Pi A$, thus we choose as local coordinates on $\Pi A$ the set $(x^i,\xi^j)$. Similarly we take $(x^i,\theta_j)$ as coordinates on $\Pi A^*$. As a consequence, derivations on $\Gamma(\wedge^\bullet A^*)$ can now be interpreted as vector fields on $\Pi A$ and similarly for $\Pi A^*$. In the case of the pair of Lie algebroids $(A,A^*),$ we want to give local expressions for $d_A$ and $d_{A^*}$. Let the anchor $a$ on $A$ and the structure constants be given by\footnote{As a standard example, one can take for $A$ the tangent bundle of a Poisson manifold $M$.} 
\eq{
a(e_i) = a^j_i \partial_j \;, \qquad [e_i,e_j]_A = \; f^k_{\,ij} e_k\;,
}
then the vector field corresponding to $d_A$ (which we denote by the same expression) is given by
\eq{
d_A = \; a^j_i(x) \xi^i \partial_j - \frac{1}{2}f^k_{\,ij}(x)\xi^i \xi^j \frac{\partial}{\partial \xi^k}\;.
}


Dually, if we have for the anchor and structure constants on $A^*$:
\eq{
a_*(e^i) = \; a^{ij}\partial_j\;, \qquad [e^i,e^j]_{A^*} = \; Q_k^{\,ij} e^k\;,
}
the dual vector field is given similarly by
\eq{
d_{A^*} = \;  a^{ij}(x)\theta_i \partial_j - \frac{1}{2}Q_k^{\,ij}(x)\theta_i\theta_j\frac{\partial}{\partial \theta_k}\;.
}
For Lie algebroids, $d_A$ and $d_{A^*}$ square to zero or, in terms of the graded commutator of vector fields, $[d_A,d_A] = 0$ and similarly for $d_{A^*}$. Vector fields of this type are called \emph{homological} (following Vaintrob) and therefore we can rephrase the definition of a Lie algebroid in the following form:
\begin{definition}
A vector bundle $A \rightarrow M$ is called \emph{Lie algebroid} if there exists a homological vector field $d_A$ of degree 1 on the supermanifold $\Pi A$. 
\end{definition}
We remark that the structure constants $f$ and $Q$ appearing in the previous homological vector fields have properties similar to the $f$- and $Q$-fluxes in string theory and therefore we use this notation.

\subsubsection{Legendre transform and Lie bialgebroid condition}
Similar to the purely even case, the cotangent bundles of $\Pi A$ and $\Pi A^*$ are symplectic supermanifolds. Denoting the coordinates on $T^*\Pi A$ by $(x^i,\xi^j,x^*_i,\xi^*_j)$, where $*$ denotes the linear dual, the canonical Poisson brackets are given by
\eq{
\{x^i,x^*_j\} = \delta^i_j\;, \quad \{\xi^i,\xi^*_j\} = \delta^i_j\;, 
}
with all the other combinations vanishing. The same statement holds for $T^*\Pi A^*$, whose local coordinates should be denoted by $(x^i, \theta_j, x^*_i, \theta_* ^j)$. The relation between the  $T\Pi A$ and  $T^*\Pi A$ is given by a symplectomorphism, the Legendre transform\footnote{For a detailed motivation, we refer the reader to \cite{kirill&ping} or \cite{Deethesis}.}
 $L : T^*\Pi A \rightarrow T^*\Pi A^*$. It associates the fibre coordinates $\xi^i$ on $\Pi A$ to the conjugate momenta $\theta_*^i$ on $\Pi A^*$, i.e. in local coordinates on $T^*\Pi A$:
\eq{
L(x^i,\xi^j,x^*_i,\xi^*_j) = \; (x^i,\xi_j^*,x^*_i,\xi^j)\;.
}

To get familiar with the Legendre transform, we  represent sections $X \in \Gamma(A),\  \omega \in \Gamma(A^*)$ as functions on $T^*\Pi A$. The situation is given by the following diagram:
\eq{
\begin{matrix}
T^*\Pi A & \overset{L}\rightarrow & T^*\Pi A^* \\
\downarrow \mathfrak{p} & & \downarrow \bar{\mathfrak{p}} \\
\Pi A & & \Pi A^* 
\end{matrix}
}
First, let $X = X^i(x)e_i \in \Gamma(A)$ or, similarly, $Y = Y^i(x)\theta_i$ as a function on  $\Pi A^*$. 

Then the interior derivative $i_X$ as a vector field on $\Pi A$ is given by $i_X = X^i(x)\frac{\partial}{\partial \xi^i}$, which can be written as a function on $T^*\Pi A$ as $h_{i_X} = X^i(x)\xi^*_i$ and we have 
\eq{
L^*\bar{\mathfrak{p}}^* X = L^*\Bigl(Y^i(x)\theta_i\Bigr) = X^i(x)\xi^*_i = h_{i_X}\;.
}
Similarly a section $\omega = \omega_i(x) e^i \in \Gamma(A^*)$ gives rise to the function $\eta_i(x) \xi^i$. The interior derivative $i_\omega$ is given by the vector field $i_\omega = \omega_i(x) \frac{\partial}{\partial \theta_i}$ on $\Pi A^*$ and thus is a function $h_{i_\omega} = \omega_i(x) \theta_* ^i$ on $T^*\Pi A^*$. We have
\eq{
L^* h_{i_\omega} = L^*\Bigl(\omega_i(x) \theta_*^i\Bigr) = \eta_i(x) \xi^i = \mathfrak{p}^* \omega\;.
}
Furthermore it is possible to interpret the differentials $d_A$ and $d_{A^*}$ as functions on $T^*\Pi A$, again with the help of the pullback by $L$ in the case of $d_{A^*}$. In local coordinates we have
\eq{
  h_{d_A} =& \;  a^j_i(x) x_j ^*\xi^i -\frac{1}{2} f^k_{\,ij}(x)\xi^i\xi^j \xi^*_k \\
  L^* h_{d_{A^*}} =&\; a^{ij}(x)x_i^* \xi_j^* - \frac{1}{2} Q_k^{\,ij}(x)\xi^k\xi_i^*\xi_j^*.
}
Thus, we  can define a meaningful sum of the two differentials as functions on $T^*\Pi A$, namely $$\mu = h_{d_A} + L^* h_{d_{A^*}}\;.$$ 

As an application of this rewriting, we state a theorem which gives an elegant characterization of a Lie bialgebroid, a detailed proof being given in \cite{Deethesis}:
\begin{theorem}
A pair of dual Lie algebroids $(A,A^*)$ is a Lie bialgebroid if and only if $ \{\mu,\mu\} = 0 $.
\end{theorem}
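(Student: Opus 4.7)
My plan is to exploit a natural $\mathbb{N}^2$-bidegree on functions on $T^*\Pi A$ under which the two summands of $\mu$ are homogeneous of different bidegrees, so that the three components of $\{\mu,\mu\}$ decouple and can be analyzed separately. I would begin by assigning
\[
|x^i|=(0,0),\quad |\xi^i|=(1,0),\quad |\xi^*_j|=(0,1),\quad |x^*_i|=(1,1),
\]
and observing from the canonical brackets $\{x^i,x^*_j\}=\delta^i_j$, $\{\xi^i,\xi^*_j\}=\delta^i_j$ that the super Poisson bracket shifts bidegree by $-(1,1)$. Reading off the explicit local formulas given in the excerpt, $h_{d_A}$ is homogeneous of bidegree $(2,1)$ while $L^*h_{d_{A^*}}$ is of bidegree $(1,2)$, so the three summands in
\[
\{\mu,\mu\} \;=\; \{h_{d_A},h_{d_A}\}\;+\;2\{h_{d_A},L^*h_{d_{A^*}}\}\;+\;\{L^*h_{d_{A^*}},L^*h_{d_{A^*}}\}
\]
live in the distinct bidegrees $(3,1)$, $(2,2)$, $(1,3)$. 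Consequently $\{\mu,\mu\}=0$ if and only if each summand vanishes independently.

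The two pure summands are handled by the standard Hamiltonian correspondence. The assignment $X\mapsto h_X$ from vector fields on $\Pi A$ to fiberwise-linear functions on $T^*\Pi A$ intertwines the super-commutator of vector fields with the super Poisson bracket, so $\{h_{d_A},h_{d_A}\}=h_{[d_A,d_A]}=2h_{d_A^2}$; this vanishes iff $d_A$ is homological, equivalently iff $(A,[\cdot,\cdot]_A,a)$ is a Lie algebroid, per the characterization already stated in the excerpt. Since $L$ is a symplectomorphism, $L^*$ is Poisson, so the symmetric argument applied to $\{L^*h_{d_{A^*}},L^*h_{d_{A^*}}\}=2L^*h_{d_{A^*}^2}$ detects the Lie algebroid structure on $A^*$. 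Once both are in place, $\{\mu,\mu\}=0$ reduces to the single identity $\{h_{d_A},L^*h_{d_{A^*}}\}=0$, which must be matched with the Lie bialgebroid compatibility condition.

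This last matching is where I expect the real work to lie. I would attack it by direct local computation: substituting the explicit expressions for $h_{d_A}$ and $L^*h_{d_{A^*}}$ into the Poisson bracket produces four families of terms coupling the data $(a,a^*,f,Q)$ and their first derivatives. Sorting them by their $\xi^i$, $\xi^*_j$, $x^*_k$ monomial content and reading off coefficients yields a system of identities which, reassembled in a coordinate-free way, is precisely the statement that $d_A$ be a derivation of $[\cdot,\cdot]_{A^*}$, i.e., the Lie bialgebroid axiom; the symmetry between $A$ and $A^*$ in the bracket $\{h_{d_A},L^*h_{d_{A^*}}\}$ is consistent with the known symmetry of the Lie bialgebroid condition. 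A cleaner but less self-contained alternative is to invoke the derived-bracket formalism of Kosmann-Schwarzbach, in which the derivation identity is encoded as a graded commutator of operators whose Hamiltonian avatar on $T^*\Pi A$ is exactly $\{h_{d_A},L^*h_{d_{A^*}}\}$. Either route completes the cross-term equivalence, and combining the three bidegree pieces gives the theorem.
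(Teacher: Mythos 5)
Your proposal is correct and takes essentially the same route as the proof the paper relies on: the paper itself gives no argument, deferring to Roytenberg's thesis, and that proof is precisely your bidegree decomposition of $\{\mu,\mu\}$ into components of bidegree $(3,1)$, $(2,2)$, $(1,3)$, with the pure terms recovering the two Lie algebroid structures via the Hamiltonian lift (and the symplectomorphism $L$), and the $(2,2)$ cross term $\{h_{d_A},L^*h_{d_{A^*}}\}=0$ identified with the bialgebroid compatibility through derived brackets. The only caveat is that this cross-term identification, the heart of the theorem, is left as a sketch, but both routes you name (coefficient matching in the monomials in $\xi^i$, $\xi^*_j$, $x^*_k$, or the derived-bracket argument) do go through.
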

This theorem is enough motivation for a definition of the Drinfel'd double of a Lie bialgebroid. 
\begin{definition}
\label{drinfeld}
The \emph{Drinfel'd double} of a Lie bialgebroid $(A,A^*)$ is given by $T^*\Pi A$ together with the homological vector field $D = \{\mu,\cdot\}$.
\end{definition}

\begin{rem}
In \cite{kirill:Crelle}, Mackenzie refers to the   non-super version as the \emph{cotangent double of $(A,A^*)$}, distinguishing it from the \emph{Courant algebroid} $A\oplus A^*$ of \cite{lwx}. The super approach helps to explain some commonality of these various notions.
\end{rem}

We will see in later sections that, in a suitable choice of coordinates, the Drinfel'd double and especially the corresponding homological vector field will play a dominant role in making contact with the structures arising in string theory. Before establishing these results, we will digress into a very basic review of important aspects of double field theory.

\section{Glances at double field theory}
It is now time to pause the exposition of supermathematical language and to review a selection of elements of double field theory. Closed string theory with toroidal target space geometries exhibits two different sets of ``momenta''. On the one hand, there are the standard momenta canonically conjugate to the center of mass coordinates of the string. But in addition there is the possibility of a string winding around compact directions, resulting in winding degrees of freedom. Taking the latter onto a similar footing as the standard momentum, the resulting coordinates conjugate to winding give a second set of coordinates referred to as \emph{winding coordinates}. The physical fields depend on both sets of coordinates.

\begin{rem}
In DFT, it is the coordinates that are doubled and `double fields' refer to fields which depend on both sets of coordinates.
Thus we also refer to double functions, double vector fields, double forms, etc.
\end{rem}



\subsection{Winding, double fields and the strong constraint}
In order to motivate these concepts, we choose the simplest example of closed string theory with toroidal target space equipped with a metric $G$ and Kalb-Ramond $B$-field (i.e. a two-form on $T^d$), both being constant. 


The corresponding sigma model is given in terms of maps $X : \Sigma \rightarrow T^d$ from a two-dimensional domain $\Sigma = \R\times S^1$ to a $d$-dimensional torus $M = T^d$. Taking coordinates $(\tau,\sigma)$ on the world sheet $\Sigma$, the action reads:
\eq{
\label{sigmamodel}
S = \frac{1}{4\pi} \int_0 ^{2\pi}\,d\sigma \int_0^\infty\, d\tau \Bigl(h^{\alpha\beta}\partial_\alpha X^i \partial_\beta X^j G_{ij} + \epsilon^{\alpha\beta}\partial_\alpha X^i\partial_\beta X^j B_{ij}\Bigr)\;.
}
We use the two-dimensional diagonal metric $h = \textrm{diag}(-1,1)$ and the antisymmetric symbol $\epsilon$ with the convention $\epsilon^{01} = -1$. The latter is used to indicate that this term is a two-form on the world sheet.


 Furthermore $\partial_\alpha = (\partial_\tau, \partial_\sigma)$ and we assume the periodicity $X^i = X^i + 2\pi$ in terms of  the torus coordinates.

The momentum $P_i$ canonically conjugate to the target space coordinate $X^i$ is given by 
\eq{
P_i = \frac{1}{2\pi} \Bigl(G_{ij}\partial_\tau X^j + B_{ij}\partial_\sigma X^j\Bigr)\;,
} 
and the Hamiltonian density corresponding to the Lagrangian in \eqref{sigmamodel} is determined to be 
\eq{
h = \frac{1}{4\pi}\Bigl(\partial_\sigma X, 2\pi P\Bigr){\cal H}(G,B) 
\begin{pmatrix}
\partial_\sigma X \\
2\pi P 
\end{pmatrix}
\;,
}
where we introduced the \emph{generalized metric} $\cal H$, a symmetric bi-linear form on the direct sum $TM \oplus T^*M$, depending on the metric $G$ and $B$-field in the following way:
\eq{
\label{genmetric}
{\cal H}(G,B)  = \begin{pmatrix}
G - BG^{-1}B & BG^{-1} \\
-G^{-1}B & G^{-1} 
\end{pmatrix}\;. 
}

Classical solutions to the equations of motion of the sigma model \eqref{sigmamodel} with periodic boundary conditions are given by the sum of a left- and a right-moving part, 
\eq{
\label{classic}
X_R^i =& x_{0R}^i + \alpha_0^i(\tau - \sigma) + i\underset{n\neq 0}\sum \frac{1}{n} \alpha_n^i e^{-in(\tau - \sigma)} \;,\\
X_L^i =& x_{0L}^i + \bar \alpha_0^i(\tau + \sigma) + i\underset{n\neq 0}\sum \frac{1}{n} \bar \alpha_n^i e^{-in(\tau +\sigma)}\;,
}
with constants $x_{0R},x_{0L}$, oscillator coefficients $\alpha_n^i,\bar \alpha_n^i, \, n\neq 0$ and zero modes $\alpha_0^i, \bar \alpha_0^i$ given in double momentum space by:
\eq{
\alpha_0^i =& \frac{1}{\sqrt{2}}G^{ij}\Bigl(p_j - (G_{jk} + B_{jk})w^k\Bigr)\;,\\
\bar\alpha_0^i =&\frac{1}{\sqrt{2}} G^{ij}\Bigl(p_j + (G_{jk} - B_{jk})w^k\Bigr)\;,
}
where the $w^i$ are defined by $w^i = \frac{1}{2\pi}\int_0 ^{2\pi}\partial_\sigma X^i$. The $p_j$ are interpreted as the canonical momenta similarly to the point particle case, but in addition there is the momentum corresponding to winding, given by $w^k$. The motivation to consider a doubling of coordinates is to take the two sets of momenta as canonically conjugate to two different sets of coordinates, in the sense that for canonical quantization we get the following operators:
\eq{
\label{operators}
p_i \simeq \frac{1}{i}\frac{\partial}{\partial x^i}\;, \quad \textrm{and} \quad w^i \simeq \frac{1}{i}\frac{\partial}{\partial \tilde x_i}\;.
}


Note, that this introduction of a double set of coordinates is a consequence of the appearance of winding degrees of freedom in the classical solution \eqref{classic} of the closed string sigma model. Taking this idea further, classical double fields are real or complex functions (or more generally sections in the appropriate tensor bundles), depending on both sets of coordinates. Note the coordinates are doubled, not the functions, vector fields, forms, etc.

Furthermore, physical fields in DFT are restricted by the \emph{strong constraint}\footnote{We are not going to elaborate on its origin, but refer the reader to the literature, e.g. \cite{Hohm:2010jy}.}. E.g. for scalar fields $\phi(x^i,\tilde x_i)$ and $\psi(x^i,\tilde x_i)$ this constraint is given by 
\eq{
\partial_i\phi (x^i,\tilde x_i)\tilde \partial^i \psi(x^i,\tilde x_i) + \tilde\partial^i \phi(x^i,\tilde x_i) \partial_i \psi(x^i, \tilde x_i) = 0\;,
}
where the $\tilde \partial^i$ are the partial derivatives with respect to the $\tilde x_i$-coordinates, as given in \eqref{operators}.
We will interpret this condition in terms of the Drinfel'd double in later sections. But for the next subsection, we take these ingredients more or less axiomatically and review results of double field theory.

\subsection{Spacetime action and gauge algebra}
The evolution of DFT in the last decade has many faces and different routes. We are not in a position to describe all of its physical motivations coming from areas as different as gauged supergravity or string field theory. Even citing all of the literature is a complicated task and so we refer the reader to the expositions which provided us with a trail through the world of double field theory (\cite{Hull:2009zb,Zwiebach:2011rg,Hull:2009mi}). 


One of the main results achieved so far is to formulate a spacetime action for fields depending on the doubled set of coordinates, having a global $O(d,d)$ symmetry (motivated by the action of \emph{T-duality} at the spacetime level) and reproducing the standard low energy effective string action when reduced to half of the coordinates:
\eq{
S= \int d^d x \sqrt{-\det G}\; e^{-2\Phi}\Bigl(R + 4\partial^i\phi \partial_i\phi - \frac{1}{12}H_{ijk}H^{ijk} \Bigr)\;.
}
We denote the field strength of the Kalb-Ramond field by $H = dB$, by $\Phi(x)$ the string dilaton 
and by $R$ the Ricci scalar determined by the metric $G_{ij}$. 

The double field action having these properties depends on the doubled dilaton $\mathfrak{D}(x,\tilde x)$ and the generalized metric ${\cal H}_{MN}(G(x,\tilde x),B(x,\tilde x))$\footnote{In the following, we denote pairs of indices corresponding to the doubled  coordinates by capital letters. Mathematically they correspond to  the structure of the direct sum of the tangent- and cotangent bundle of the physical configuration space, but coefficient functions depending on both sets of coordinates. As an example we have  $V^N = \Bigl(V^i(x,\tilde x),V_i(x,\tilde x)\Bigr)$ and $V_N = \Bigl(V_i(x,\tilde x),V^i(x,\tilde x)\Bigr)$.}
defined in \eqref{genmetric}, but now also allowed to depend on the doubled set of coordinates. It is given by:
\eq{
\label{dftaction}
S_{{\cal H}} = \int \, dx d\tilde x e^{-2\mathfrak{D}} \Bigl( &\frac{1}{8} {\cal H}^{MN}\partial_M {\cal H}^{KL} \partial_N {\cal H}_{KL} -\frac{1}{2}{\cal H}^{MN}\partial_N{\cal H}^{KL}\partial_L {\cal H}_{MK}  \\
&2\partial_M \mathfrak{D} \partial_N{\cal H}^{MN} + 4{\cal H}^{MN}\partial_M \mathfrak{D} \partial_N \mathfrak{D} \Bigr)\;.
}
\begin{rem}
The notation we use here makes global $O(d,d)$-symmetry manifest. On a $d$-dimensional vector space $V$, $O(d,d)$-transformations are defined by the requirement 
\eq{
A \eta A^t = \eta\;, \quad A \in O(d,d)\;,
}
where $\eta$ is the bilinear form on on $V \oplus V^*$ represented in matrix form by
\eq{
\eta_{MN} = \begin{pmatrix}
0 & \textrm{id} \\
\textrm{id} & 0 
\end{pmatrix}\;.
} 
Indices $M,N$ are raised and lowered by contraction with $\eta_{MN}$ or its inverse $\eta^{MN}$.
\end{rem}

Instead of discussing problems of this action such as  the integration measure or covariance and attempts to provide solutions, we are going to state one additional symmetry \cite{Hull:2009zb} of the action which we are going to interpret mathematically in the following section. Let $\Sigma$ be the pair $(X(x^i,\tilde x_j),\omega(x^i,\tilde x_j))$\footnote{I.e. in components $(\Sigma)^M = (X^i,\eta_j)$.}. We will refer to objects of this kind as ``double vectors'' in the following. 
In addition, locally define the \emph{generalized derivative} ${\cal L}_{\Sigma}$ to act on double scalars and double vectors in the following way:
\eq{
{\cal L}_\Sigma \phi =& \Sigma^M \partial_M \phi \;,\\
({\cal L}_\Sigma V)_M =& \Sigma^K\partial_K V_M + (\partial_M\Sigma^K -\partial^K\Sigma_M)V_K \;,\\
({\cal L}_\Sigma W)^M =& \Sigma^K\partial_K W^M -(\partial_K\Sigma^M -\partial^N\Sigma_K)W^K\;,
}

\noindent and extended as a derivation to double fields with more indices, similarly to the standard Lie derivative on vector fields.


 It turns out that the double field action \eqref{dftaction} is invariant under transformations given by the action of the generalized Lie derivative, meaning for the fields ${\cal H}_{MN}$ and $d$:
\eq{
\delta_\Sigma {\cal H}_{MN} &= ({\cal L}_\Sigma {\cal H})_{MN}\;,\\
\delta_\Sigma e^{-2d} &= \partial_M\Bigl(\Sigma^M e^{-2d}\Bigr)\;,
}
where the latter is the generalization of the transformation of a scalar density to double fields. To get information about the nature of the generalized Lie derivative and the bracket structure underlying the space of double vectors, it is important to consider the commutator of two generalized Lie derivatives. It is given for $\Sigma_1(x^i,\tilde x_j) = (X(x^i,\tilde x_j),\eta(x^i,\tilde x_j))$ and $\Sigma_2(x^i,\tilde x_j) = (Y(x^i,\tilde x_j),\omega(x^i,\tilde x_j))$ by 
\eq{
\Bigl[{\cal L}_{\Sigma_1},{\cal L}_{\Sigma_2}\Bigr] = \; -{\cal L}_{[\Sigma_1,\Sigma_2]_C}\;,
}
where the bracket determining the algebra of ``double sections'' is given by the \emph{C-bracket}. Note that the geometric meaning of ``double sections'' is not obvious at this point in our exposition of double field theory and we only use it as a name for pairs given above. We will provide an interpretation in the next section, which leads to the right bracket structure.


Because the C-bracket will be the starting point of relating the geo\-me\-try of symplectic supermanifolds to the algebraic relations for doubled sections, we state the component form for this bracket in full detail. In $O(d,d)$-notation, it is given by:
\eq{
\Bigl([\Sigma_1,\Sigma_2]_C\Bigr)^M = \Sigma_1^K\partial_K\Sigma_2^M - \Sigma_2^K\partial_K \Sigma_1 ^M -\frac{1}{2}\Bigl(\Sigma_1^K\partial^M\Sigma_{2K} - \Sigma_2 ^K\partial^M\Sigma_{1K}\Bigr)\;,
}
or, if we sort by the various components of different type, we get:
\eq{
\label{Cbracket}
\Bigl([X,Y]_C\Bigr)_i =&\; 0\;, \quad 
\Bigl([X,Y]_C \Bigr)^i =\; X^k\partial_k Y^i - Y^k\partial_k X^i \;, \\
\Bigl([X,\omega]_C\Bigr)_i =&\; \;\;X^k\partial_k\omega_i -\frac{1}{2}(X^k\partial_i\omega_k - \omega_k\partial_i X^k) \;,\\
\Bigl([X,\omega]_C\Bigr)^i =&\;-\omega_k \tilde\partial^k X^i - \frac{1}{2}(X^k\tilde\partial^i\omega_k - \omega_k\tilde\partial^i X^k) \;,\\
\Bigl([\eta, Y]_C \Bigr)_i =&\; -Y^k\partial_k \eta_i + \frac{1}{2}\Bigl(Y^k\tilde\partial^i\eta_k - \eta_k\tilde \partial^i Y^k \Bigr)\;, \\
\Bigl([\eta,Y]_C \Bigr)^i =&\; \;\;\eta_k\tilde \partial^k Y^i + \frac{1}{2}\Bigl(Y^k\tilde\partial^i\eta_k - \eta_k\tilde\partial^i Y^k\Bigr)\;, \\
\Bigl([\eta,\omega]_C\Bigr)_i =& \; \;\;\eta_k\tilde \partial^k \omega_i - \omega_k\tilde \partial^k \eta_i\;, \quad 
\Bigl([\eta,\omega]_C\Bigr)^i = 0\;.
}

Mathematically, it is not clear at this point how to interpret doubled sections, having vector- and form-like parts but depending on a doubled set of coordinates. 


The algebra of gauge transformations parametrized by these objects is governed by the C-bracket and the transformations are a symmetry of the double field spacetime action. In the following we are going to re-interpret double fields as functions on an even symplectic supermanifold and derive the expression for the C-bracket by using its Poisson algebra. 

\section{C-bracket and strong constraint from Poisson algebra}
After these mainly introductory sections, we will now choose a specific setting to derive the C-bracket and the strong constraint of double field theory in a purely algebraic way. Starting with a configuration space manifold $M$, we use the even symplectic structure of $T^*\Pi TM$ introduced in section \ref{sec-language}. The main argument to connect to double field theory will be the choice of two different sets of canonical momenta and their conjugate coordinates.

\subsection{Two sets of momenta and the C-bracket}
As we have seen in the previous subsection on double field theory, dynamical fields depend on $2n$ coordinates if the physical configuration space has dimension $n$. To make contact with physical observables, one has to reduce the number of coordinates. In the following, we will start with an $n$-dimensional configuration space $M$ and define two different sets of momenta $p_a$ and $\tilde p^a$ on $T^*\Pi TM$ being related to the original even momenta $x^*_i$ by the addition of an even combination of the $\xi^i$ and $\xi^*_i$. 
Thus we implicitly get two different sets of coordinates $x^a, \tilde x_a$ and have achieved a ``formal'' doubling of the configuration space. The justification of this procedure is given by the exact reproduction of the C-bracket and the strong constraint of double field theory. 

The choice of two sets of momenta on $T^*\Pi TM$ is canonically given by the form of the differentials $d_A$ and $d_{A^*}$ introduced in previous sections. For convenience, we now set $A = TM$, and thus $A^* = T^*M$. The definition of the new momenta, with index $a$, is dictated by:
\eq{
h_{d_A} =& \xi^i\Bigl(a^j_i(x) x_j^* -\frac{1}{2}f^k_{\,ij}(x)\xi^j\xi^*_k\Bigr) =:\xi^a p_a\;,\\  
h_{d_{A^*}} =& \xi^*_i\Bigl(a^{ij}x_j^* + \frac{1}{2}Q_k^{\,ij}\xi^k \xi^*_j\Bigr) =: \xi^*_a \tilde p^a\;.
}
Clearly, the two sets $p_a$, $\tilde p^a$ are related to each other, so to get one independent set of variables again, we have to choose one set or a specific combination of both. One way is given by the strong constraint in double field theory and will be the topic of the next subsection. First, we will use the momenta and their canonically conjugate variables to derive the C-bracket in an algebraic way. The conjugate variables $x^a, \tilde x_a$ are defined by the standard Poisson algebraic relations:
\eq{
\{p_a, x^b\} = \delta_a^b\;, \quad \{\tilde p^a, \tilde x_b\} = \delta^a_b\;,
}
and we will label functions $f$ on the configuration space $M$ by $f(x^i) = f(x^a,\tilde x_a)$, which is well-defined because the $x^a,\tilde x_a$ are related to the original coordinates $x^i$ by a coordinate transformation and the possible addition of even combinations of the $\xi^i$ and $\xi^*_i$.  Finally, to state the main result of this section, a derivation of the C-bracket of double field theory in terms of the Poisson algebra on $T^*\Pi A$, we introduce a unified notation for the lift of vector fields and one-forms to $T^*\Pi A$: Let $p : T^*\Pi A \rightarrow \Pi(A \oplus A^*)$ be the projection such that 
\eq{
p^* X = L^* \bar{\mathfrak{p}}^* X = h_{i_X}\; &\textrm{for}\; X\in \Gamma(A)\;,\\
p^* \omega = \mathfrak{p}^* \omega \; &\textrm{for}\; \omega \in \Gamma(A^*)\;.
}
In local coordinates, this means for the lifts of vector fields and one-forms to functions on $T^*\Pi A$:
\eq{
p^*(X^i e_i) = \; X^i \xi^*_i \;, \qquad \; p^*(\omega_j e^j) = \omega_j \xi^j\;.
}
As noticed above, we can view the coordinate functions $X^i,\omega_j$ as being dependent on the two sets of coordinates:\  $X^i(x,\tilde x), \omega_j(x,\tilde x)$. Then one has the following result:
\begin{theorem}
\label{maintheorem}
Let $X+\eta$ and $Y+\omega$ be sections of $A\oplus A^*$ with corresponding lifts to $T^*\Pi A$ given by $\Sigma^1 = p^*(X+\eta)$, $\Sigma^2 = p^*(Y+\omega)$. In addition, let the bilinear operation $\circ$ be defined as
\eq{
\Sigma^1 \circ \Sigma^2 = \; \Bigl\{\{\xi^a p_a + \xi^*_a \tilde p^a, \Sigma^1\},\Sigma^2\Bigr\}\;.
}
Then the C-bracket of $\Sigma^1$ and $\Sigma^2$ in double field theory is given by 
\eq{
[\Sigma^1,\Sigma^2]_C =\; \frac{1}{2}\Bigl(\Sigma^1 \circ \Sigma^2 - \Sigma^2 \circ \Sigma^1 \Bigr)\;.
}
\end{theorem}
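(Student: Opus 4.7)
The proof is a direct computation structured around the derived bracket construction. The conceptual backbone is already in place: for an odd function $\mu$ on the even symplectic supermanifold $T^*\Pi A$ satisfying $\{\mu,\mu\}=0$ (which, by the Lie bialgebroid theorem above, holds for our $\mu=h_{d_A}+L^*h_{d_{A^*}}$), Roytenberg's derived bracket $(\Sigma^1,\Sigma^2)\mapsto\{\{\mu,\Sigma^1\},\Sigma^2\}$ is known to reproduce a Dorfman-type bracket on sections of $A\oplus A^*$ lifted to $T^*\Pi A$, whose antisymmetrization is the ordinary Courant bracket. The novelty here is purely coordinate-theoretic: after introducing $p_a$ and $\tilde p^a$, the function $\mu$ takes the especially simple form $\xi^a p_a+\xi^*_a\tilde p^a$, and the coefficient functions of $\Sigma^1,\Sigma^2$ are allowed to depend on both $x^a$ and $\tilde x_a$. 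The claim is that this enhancement is precisely what upgrades the Courant bracket to the C-bracket.

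Concretely, I would first write the lifted sections as $\Sigma^1=X^a(x,\tilde x)\xi^*_a+\eta_a(x,\tilde x)\xi^a$, and analogously for $\Sigma^2=Y^a\xi^*_a+\omega_a\xi^a$, and then compute $\{\mu,\Sigma^1\}$ using the graded Leibniz rule together with the canonical Poisson relations $\{p_a,x^b\}=\delta^b_a$, $\{\tilde p^a,\tilde x_b\}=\delta^a_b$, $\{\xi^a,\xi^*_b\}=\delta^a_b$. Because $p_a$ acts on functions of $(x,\tilde x)$ as $\partial_a$ and $\tilde p^a$ as $\tilde\partial^a$, the resulting even function naturally splits into four kinds of pieces: two carrying residual momentum factors (schematically $X^a p_a$ and $\eta_a\tilde p^a$) and two that are odd-quadratic (schematically $\xi^b(\partial_b X^a)\xi^*_a$, $\xi^*_b(\tilde\partial^b\eta_a)\xi^a$, and the two cross terms).

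Next, I would evaluate $\{\{\mu,\Sigma^1\},\Sigma^2\}$ by another application of the graded Leibniz rule. The momentum-carrying pieces of $\{\mu,\Sigma^1\}$ produce the ``transport'' contributions $X^b\partial_b Y^a$, $\eta_b\tilde\partial^b\omega_a$, etc., while the odd-quadratic pieces hit $\Sigma^2$ through $\{\xi^a,\xi^*_b\}=\delta^a_b$ and produce the ``twist'' contributions involving $(\partial_a X^b)\omega_b$, $(\tilde\partial^a\eta_b)Y^b$, and their partners. Organizing the output by its $\xi^*_a$-coefficient and its $\xi^a$-coefficient extracts the vector and one-form parts of $\Sigma^1\circ\Sigma^2$; antisymmetrizing through $\tfrac{1}{2}(\Sigma^1\circ\Sigma^2-\Sigma^2\circ\Sigma^1)$ then reproduces, component by component, the eight explicit expressions in~\eqref{Cbracket}.

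The main obstacle is bookkeeping rather than ideology, and sits in three places. First, the graded-symmetry of the odd-odd Poisson bracket $\{\xi^a,\xi^*_b\}=\{\xi^*_b,\xi^a\}=\delta^a_b$ flips several naively-expected signs relative to the purely even intuition. Second, one must track whether each derivative arises from $p_a$ (giving $\partial_a$) or from $\tilde p^a$ (giving $\tilde\partial^a$), since this is the only place where the two kinds of derivative are generated and is exactly what distinguishes the C-bracket from the ordinary Courant bracket. Third, the $\tfrac{1}{2}$-factors generated by the antisymmetrization must match those explicitly displayed in~\eqref{Cbracket}, which is the sanity check that the derived bracket is doing the right combinatorics. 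Once these three ledgers are kept straight, matching coefficients of $\xi^*_a$ and $\xi^a$ is mechanical and the eight C-bracket components emerge directly.
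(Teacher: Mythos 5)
Your proposal is correct and follows essentially the same route as the paper: an explicit local-coordinate evaluation of the double Poisson bracket $\{\{\xi^a p_a+\xi^*_a\tilde p^a,\Sigma^1\},\Sigma^2\}$ using the canonical relations (so that $p_a$ and $\tilde p^a$ generate $\partial_a$ and $\tilde\partial^a$), followed by antisymmetrization and matching the coefficients of $\xi^a$ and $\xi^*_a$ against the components in \eqref{Cbracket}. The only cosmetic difference is that the paper exploits bilinearity to treat the four pieces $[p^*X,p^*Y]_C$, $[p^*X,p^*\omega]_C$, $[p^*\eta,p^*Y]_C$, $[p^*\eta,p^*\omega]_C$ separately, while you carry the full lifted sections through a single computation.
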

\begin{proof}
The proof is done by explicit calculation in local coordinates. As this illustrates the use of lifting to functions on $T^*\Pi A$, we list the explicit steps. We first note that
\eq{
[\Sigma^1,\Sigma^2]_C = [p^*X,p^*Y]_C + [p^*X,p^*\omega]_C + [p^*\eta,p^*Y]_C + [p^*\eta,p^*\omega]_C \;,
}
and compute the various terms separately. For the mixed terms we have:
\eq{
p^*X\circ p^*\omega =& \{\{\xi^a p_a + \xi_a^*\tilde p^a, X^i(x,\tilde x)\xi_i^*\}, \omega_j(x,\tilde x) \xi^j\} \\ 
=&\{\partial_a X^i\xi^a\xi_i ^* + X^a p_a +  \tilde \partial^a X^i\xi_a ^* \xi_i^*, \omega_j \xi^j\} \\
=& \omega_i\partial_a X^i\xi^a +  X^a\partial_a \omega_j\xi^j +  \omega_i \tilde\partial^a X^i\xi_a ^*  -  \omega_a \tilde\partial^a X^i\xi_i^* \\
=& (\omega_k\partial_i X^k + X^k\partial_k \omega_i)\xi^i + (\omega_k \tilde \partial^i X^k - \omega_k \tilde \partial^k X^i)\xi^* _i \;.
}

Analogously, we have
\eq{
p^*\omega \circ p^*X =& (X^k\partial_i\omega_k -X^k\partial_k\omega_i)\xi^i + (X^k\tilde\partial^i \omega_k + \omega_k\tilde\partial^k X^i)\xi^*_i\;.
}
Thus we get for the C-bracket:
\eq{
[p^*X,p^*\omega]_C = &\frac{1}{2}(p^*X\circ p^*\omega - p^*\omega \circ p^*X) \\
=& \;\Bigl(X^k\partial_k\omega_i -\frac{1}{2}(X^k\partial_i\omega_k - \omega_k\partial_i X^k)\Bigr)\xi^i \\
&+ \Bigl(-\omega_k \tilde\partial^k X^i - \frac{1}{2}(X^k\tilde\partial^i\omega_k - \omega_k\tilde\partial^i X^k)\Bigr)\xi_i^*\;.
}  
A similar calculation leads to $[p^*\eta,p^*Y]_C$. By using the same algebra as before, we can also compute the parts consisting of pure vector fields and pure one-forms: 
\eq{
[p^*X,p^*Y]_C =& \; (X^k\partial_k Y^i - Y^k\partial_k X^i)\xi_i^*\;, \\
[p^*\eta,p^*\omega]_C =& \;(\eta_k\tilde \partial^k \omega_i - \omega_k\tilde \partial^k \eta_i)\xi^i\;.
}
which is manifestly the C-bracket \eqref{Cbracket} encountered in the section on double field theory. 
\end{proof}
\begin{rem}
This form of the C-bracket in terms of a two-fold Poisson bracket shows that it is a derived structure in the sense of \cite{MR2223157,zbMATH02217997}.
\end{rem} 

\subsection{Drinfel'd double and the strong constraint}
Having established an interpretation of the gauge algebra of double field theory in terms of the symplectic structure on $T^* \Pi TM$, it is natural to ask about the meaning of the conditions defining the Drinfel'd double as given in definition \ref{drinfeld}. Recall, that for $\mu = h_{d_A} + L^*h_{d_{A^*}}$, as consequence of $\{\mu,\mu\}=0$, the operator $D =\{\mu, \cdot\}$ is a homological vector field. Additionally in the last section we introduced momenta $p_a$ and $\tilde p^a$ to rewrite $\mu$ as $\mu = p_a\xi^a + \tilde p^a\xi^*_a$. 

Let $\phi(x^i,\tilde x_i)$ and $\psi(x^i,\tilde x_i)$ be two functions on $T^*\Pi TM$ depending on the coordinates $x^i$ and $\tilde x_i$. Using the homological vector field $D$, we get 
\eq{
0 =&\; D^2\phi = \Bigl\{\mu,\{\mu,\phi\}\Bigr\} \\
=&\;\Bigl\{p_b \xi^b + \tilde p^b \xi^*_b, \{p_a \xi^a + \tilde p^a \xi^*_a, \phi\}\Bigr\} \\
=&\;\Bigl\{ p_b \xi^b + \tilde p^b \xi^*_b, \partial_a \phi \,\xi^a + \tilde \partial^a\phi\, \xi^*_a \Bigr\}\\
=&\; p_a \tilde \partial^a \phi + \tilde p^a \partial_a \phi\;.
}
Taking furthermore the bracket with the second function $\psi(x^i,\tilde x_i)$, we get
\eq{
0 = &\; \{D^2\phi, \psi\} \\
=&\;\{ \tilde p^a \partial_a \phi + p_a \tilde \partial^a \phi, \psi\} \\
=&\; \partial_a \phi \,\tilde \partial^a \psi + \tilde \partial^a \phi\, \partial_a \psi\;.
}
Following the interpretation of double scalar fields as functions on $T^*\Pi TM$, depending on $x^a$ and $\tilde x_a$, we recover the \emph{strong constraint} of double field theory. As this is an essential constraint that physical fields in DFT should obey, we can also say that the gauge algebraic structure of the latter is governed by the Drinfel'd double of the Lie bialgebroid $(A=TM,A^*=T^*M)$. In the following section we will see how solutions to the strong constraint result into the familiar Courant brackets used in DFT and in Poisson geometry.

\begin{rem}
We remark that there are also different constraints for physical fields used in the literature on DFT. We leave it for future work to interpret also these types of conditions in terms of symplectic geometry on $T^*\Pi TM$.
\end{rem}

\subsection{Projection to half of the coordinates}
As we have seen, the strong constraint can be interpreted for functions on $T^*\Pi TM$ as the Drinfel'd double condition for a Lie bialgebroid. In DFT, physical fields have to obey the strong constraint. Different solutions to the latter correspond to different physical ``frames'' (sometimes called polarizations), related by $O(d,d)$ transformations, if $M$ is $d$-dimensional. In the following we are going to interpret the simplest solution to the strong constraint in terms of the geometric language of the last sections. 

To begin with, let us recall the following observation related to the C-bracket of double field theory (e.g. \cite{Hull:2009zb}) for the simplest solution of the strong constraint: Considering only fields depending on $x^i$ (i.e. setting all the derivatives $\tilde \partial^i$ to zero), the C-bracket reduces to the original Courant bracket of \cite{zbMATH00004959}, i.e. for pairs $(X^i(x),\eta_j(x))$ and $(Y^i(x),\omega_j(x))$, the C-bracket reduces to:
\eq{
\label{oldCourant}
[X+\eta,Y+\omega]_{\textrm{Cour}} := \; [X,Y] + L_X \omega -L_Y \eta + \frac{1}{2}d(i_Y\eta -i_X\omega)\;.
}
From the viewpoint of the last sections, dropping the dependence on $\tilde x_i$ is equivalent to setting the canonical momenta conjugate to these coordinates to zero, i.e. $\tilde p^i = 0$. We will now show that this reproduces the bracket \eqref{oldCourant} by doing a similar calculation as in the proof of theorem \ref{maintheorem}. Let $\Sigma_1 = X^m(x)\xi_m^* + \eta_m(x) \xi^m$ and $\Sigma_2 = Y^k(x)\xi_k^* + \omega_k(x) \xi^k$ be the lifts of the above pairs to functions on $T^*\Pi TM$. Then we have:
\eq{
\Sigma_1 \circ \Sigma_2 = & \Bigl\{\{p_a \xi^a, X^m\xi^*_m + \eta_m \xi^m\},Y^k\xi^*_k + \omega_k \xi^k\Bigr\} \\
=& \Bigl\{p_aX^a + \partial_a X^m\xi^a\xi^*_m + \partial_a \eta_m \xi^a\xi^m, Y^k\xi_k^* + \omega_k \xi^k\Bigr\} \\
= & (X^k\partial_k Y^i - Y^k\partial_k X^i)\xi^*_i  \\
& +(X^k\partial_k\omega_i + \omega_k \partial_i X^k + Y^k\partial_i \eta_k -Y^k\partial_k \eta_i)\xi^i \;.
}
Antisymmetrization of this expression gives the result 
\eq{
\frac{1}{2}(\Sigma_1 \circ \Sigma_2 - \Sigma_2 \circ \Sigma_1)=&\; (X^k\partial_k Y^i -Y^k\partial_k X^i)\xi^*_i \\
&+ \Bigl( X^k\partial_k \omega_i - X^k\partial_i\omega_k + \frac{1}{2}\partial_i(X^k\omega_k) \\
&-Y^k\partial_k \eta_i + Y^k\partial_i\eta_k - \frac{1}{2}\partial_i(Y^k\eta_k)\Bigr)\xi^i\;,
}
which is just the bracket \eqref{oldCourant}, lifted to a function on $T^*\Pi TM$. As a result, we are able to reproduce this case of reducing to physical fields in terms of symplectic geometry of the Drinfel'd double.

\section{Conclusion and outlook}
Even symplectic supermanifolds give the possibility to interpret mathematically the notion of a double field and the gauge algebra of DFT. The C-bracket is given by Poisson brackets and the strong constraint is a consequence of the condition determining the Drinfel'd double of a Lie bialgebroid.

These results are a sufficient motivation for continuing this super\-ma\-the\-ma\-ti\-cal viewpoint on DFT. Still restricting to the gauge algebra of the latter, an open physical question is the appropriate inclusion of three-forms and skew-symmetric three-tensor fields (which are the ``fluxes'' in string theory) into the algebra. The advantage of the supermathematical language is the fact that there is an immediate possibility to achieve this: Let $(A,A^*)$ be a pair of Lie algebroids in duality, $H \in \Gamma(\wedge^3 A^*)$ a three-form on $A$ and $R\in \Gamma(\wedge^3 A)$ a three-vector on $A$. Consider now the lifts of these quantities to $T^*\Pi A$ by using in addition the Legendre transform for the vectors. If, according to \cite{Deethesis}, one defines:
\eq{
\nu = h_{d_A} + L^*h_{d_{A^*}} + \mathfrak{p}^* H + L^*\bar{\mathfrak{p}}^* R \;,
}
a \emph{proto-bialgebroid} is given by the supermanifold $T^*\Pi A$ together with the condition that $\{\nu,\nu\} = 0$. Appropriate Dorfman-type brackets are formulated in \cite{Deethesis}. An important question for future work is to compare this to the physics of double fields and the appearance of $H$- and $R$-fluxes there, especially the right inclusion of the parts depending on the winding coordinates. Furthermore the relation of the condition on $\nu$ to the Bianchi-identities of \cite{Blumenhagen:2012pc} has to be explored.

\vspace{40pt}

\emph{Acknowledgements:}  The authors want to thank Dimitry Roytenberg and Kirill Mackenzie for providing them with background knowledge and for discussions. Furthermore, we thank Barton Zwiebach and Ping Xu for discussions and the organisers of the workshop ``Non-associativity in mathematics and physics''\footnote{\url{www.gravity.psu.edu/events/math_phys/program.shtml}} at the Pennsylvania State University for providing a stimulating atmosphere.

\bibliographystyle{alpha}
\bibliography{mybig}
\end{document}